\newcommand{\Input}{\textbf{Input:\quad}}
\newcommand{\Initial}{\textbf{Initial:\quad}}
\newcommand{\Output}{\textbf{Output:\quad}}
\newtheorem{thm}{Theorem}[subsection]
\newtheorem{prop}[thm]{Proposition}
\newtheorem{exam}[thm]{Example}
\newtheorem{ques}[thm]{Question}
\begin{document}

\title{A Linear Algebra Formulation for Boolean Satisfiability Testing\thanks{This work was partially supported by NSF of China (No. 11401061, No. 61202131 and No.61672448), CAS Youth Innovation Promotion Association (No. 2015315), Chongqing Science and Technology Commission projects cstc2014jcsfglyjs0005 and cstc2014zktjccxyyB0031.}}

\author{Chengling Fang\inst{1} \and Jiang~Liu\inst{2}} 
\institute{Chongqing Jiaotong University. \and
Chongqing Institute of Green and Intelligent Technology,\\
Chinese Academy of Sciences. \\
\email{liujiang@cigit.ac.cn.}}

\maketitle

\begin{abstract}
In the article ``The State of SAT'', the authors asked whether a procedure dramatically different from DPLL can be found for handling unsatisfiable instances. This study proposes a new linear programming approach to address this issue efficiently. Our experiments showed that the new method works for many unsatisfiable instances. However, we must concede that this method should be incomplete; otherwise, it will imply P=co-NP.
\end{abstract}

\keywords{UNSAT, co-NP-complete, Integral Linear Programming, Boolean Function, Boolean Solution}

\section{Introduction}

The problem of determining whether a Boolean formula is unsatisfiable is called Boolean unsatisfiability (UNSAT) problem. Its opposite the Boolean satisfiability (SAT) problem is  famous  in mathematical logic and computing theory, which is one of the first proven NP-complete problems \cite{Cook1971, Levin1973}. SAT is widely studied because of its well known significance on  both of theory and practice \cite{SATSurvey96,cookPvsNP,Kautz2007Survey,Aaronson2016}.  Despite the worst-case exponential running time of all known algorithms for SAT, a lot of impressive progresses have been made in solving practical SAT problems with up to a million variables \cite{LargeVariablesApplication03,LargeVariableApplication2003}.

Based on the DPLL method \cite{DP1960, Davis1962}, there were developed a large number of high-performance algorithms for SAT: local search algorithms \cite{Selman95localsearch, Li03qingting, UnitWalk2005, StochasticLocalSearch2014}, stochastic algorithms \cite{PPSZ1998,   ProbabilisticAlgorithm1999, PPSZ2005}, conflict-driven clause learning algorithms  \cite{GRASP1999, chaff2001, CDLZhang2001}, and so on. These algorithms are somehow logic search methods. A second category of interesting  methods are related to constraint satisfaction problems which are used to employ various optimization strategies, for example, Lagrangian techniques \cite{Chang1995, Shang98adiscrete}, Newton's method and descent method for Universal SAT \cite{Gu1999}. A third typical method is based on statistical physics analysis which sugguests new effective heuristic algorithms for finding SAT assigments for random $k$-SAT problems \cite{yedidia2001generalized, maneva2007a}.
There are too many elegant work to mention, we apologize for missing references, and for more related work please refer to some survey papers such as \cite{Gu1999,Kautz2007Survey,Aaronson2016} and  the references therein.  Many of them also can be used to study UNSAT indirectly.

Comparing with the large number of studies about SAT, there were a few direct work \cite{GUNSAT2007} about UNSAT. Note that, a formula being unsatisfiable is logically equivalent to its negation being valid. So UNSAT amounts to the tautology problem which is co-NP-complete.  When an unsatisfiable conjunctive normal form (CNF) formula contains too many clauses, its searching space is intractably huge. Naturally, any DPLL based method for such unsatisfiable formula will require a huge time. In practice, it was ever for a long time that there was no local search algorithm for UNSAT before GUNSAT \cite{GUNSAT2007} was proposed.  Therefore it \cite{Kautz2007Survey} was eagerly concerned whether a procedure dramatically different from DPLL can be found for handing UNSAT. This study presents a novel method called linear algebra formulation (LAF) to address this issue efficiently.


In the study of SAT, researchers are used to restrict SAT to a special categories such as $k$-SAT, XOR-SAT, Horn-SAT,  1-in-3-SAT and so on. By the Schaefer's dichotomy theorem \cite{Schaefer1978}, each restriction is either in $\mathbf{P}$ or $\mathbf{NP}$-complete. In above categories, the 1-in-3-SAT received our attention because we can establish a natural relation between it and a system of linear equations. A $3$-CNF formula  is called 1-in-3 satisfiable if there is a truth assignment to its Boolean variables such that each clause has exactly one true literal, otherwise 1-in-3 unsatisfiable. The 1-in-3-SAT problem is to determine whether a $3$-CNF formula is 1-in-3 satisfiable, which is $\mathbf{NP}$-complete \cite{Schaefer1978}. Similarly,   the 1-in-3-UNSAT problem is  to determine whether a $3$-CNF formula is 1-in-3 unsatisfiable.

The basic idea of LAF for UNSAT is as follows. It first converts the UNSAT problem into a 1-in-3-UNSAT problem. Then it converts the 1-in-3-UNSAT problem into a Boolean solution (BoS) problem of the corresponding system of linear equations, where a BoS is a solution composed merely of $0$ and $1$. For the resulted linear system, we develop a linear algebra formulation to efficiently test whether it has any BoS. Through this approach, we obtain some sufficient conditions for UNSAT. Let's explain the idea by the following toy example.

\begin{exam}
Consider the following  $3$-CNF formula
	\begin{eqnarray}
	(X_1\vee X_2\vee X_3) \wedge (X_2\vee X_3\vee X_4) \wedge (X_1\vee X_4) \label{fm:unsatexam}
	\end{eqnarray}
	which is   1-in-3 unsatisfiable. First, Formula (\ref{fm:unsatexam}) is transformed into a linear system
	\begin{eqnarray}
	X_1+X_2+X_3 & = & 1    \nonumber\\
	X_2+X_3+X_4 & = & 1  \label{eq:unslb}  \\
	X_1+X_4 & = & 1     \nonumber
	\end{eqnarray}
	with an abuse of notations of Boolean variable and equation variable by the same symbol. As a result, formula (\ref{fm:unsatexam}) is 1-in-3-SAT iff system (\ref{eq:unslb}) has some BoS. To restrict  $X_1,\cdots, X_4$ being $0$ or $1$, it just puts the following quadratic constraints
	\begin{eqnarray}\label{eq:qdct}
	X_1 = X_1^2, X_2 = X_2^2, X_3 = X_3^2, X_4 & = & X_4^2   \label{eq:unslbs}
	\end{eqnarray}
	Now, formula (\ref{fm:unsatexam}) is 1-in-3-SAT iff the polynomial system consisting of (\ref{eq:unslb}) and (\ref{eq:unslbs}) has a real solution.  Unfortunately, so far there is no fast (polynomial time) algorithm for deciding if a  quadratic system has a real solution.
	
	We then turn to a quadratic system which has the same real solution and contains merely two degree monomials as follows
	\begin{eqnarray}\label{exam:inerprod}
	X_1^2+X_2^2+X_3^2 & = & 1   \label{eq:cd} \\
	X_2^2+X_3^2+X_4^2 & = & 1   \\
	X_1^2+X_4^2 & = & 1    \label{eq:sqeqn} \\
	(X_1+X_2+X_3)^2 & = & 1    \\
	(X_2+X_3+X_4)^2 & = & 1   \\
	(X_1+X_4)^2 & = & 1    \\
	(X_1+X_2+X_3)\cdot(X_2+X_3+X_4) & = & 1    \label{eq:intps}\\
	(X_1+X_2+X_3)\cdot(X_1+X_4) & = & 1   \\
	(X_2+X_3+X_4)\cdot(X_1+X_4) & = & 1   \label{eq:intpd} \\
	X_1\cdot(X_2+X_3+X_4) & = & X_1^2   \label{eq:intprod}  \\
	X_2\cdot(X_1+X_4) & = & X_2^2  \\
	X_3\cdot(X_1+X_4) & = & X_3^2    \\
	X_4\cdot(X_1+X_2+X_3) & = & X_4^2    \label{eq:szd} \\
	X_1\cdot(X_1+X_2+X_3) & = & X_1^2  \label{eq:zqm}    \\
	X_2\cdot(X_1+X_2+X_3) & = & X_2^2   \\
	X_3\cdot(X_1+X_2+X_3) & = & X_3^2   \\
	X_2\cdot(X_2+X_3+X_4) & = & X_2^2   \\
	X_3\cdot(X_2+X_3+X_4) & = & X_3^2   \\
	X_4\cdot(X_2+X_3+X_4) & = & X_4^2   \\
	X_1\cdot(X_1+X_4) & = & X_1^2    \\
	X_4\cdot(X_1+X_4) & = & X_4^2     \label{eq:zud}
	\end{eqnarray}
	Similarly, system (\ref{eq:unslb}) has a BoS iff the quadratic system consisting of euqations (\ref{eq:cd}-\ref{eq:zud}) has a BoS.
	
	Now, we relinearize  system (\ref{eq:cd}-\ref{eq:zud}) by substituting all monomials $X_iX_j$ and $X_jX_i$ by a single variable with $i\leq j$. Solving the linearization system obtains
	\begin{eqnarray}
	X_1X_2=X_1X_3 =X_1X_4=X_2X_3=X_2X_4 =X_3X_4= 0  \label{exam:cdxz}
	\end{eqnarray}
	From equations  (\ref{eq:intprod}-\ref{eq:szd}) and (\ref{exam:cdxz}), it must be
	\begin{eqnarray}\label{exam:sqz}
	X_1^2 =X_2^2=X_3^2=X_4^2= 0
	\end{eqnarray}
	It is evidently contradicting with the system composed by equations (\ref{eq:cd}-\ref{eq:sqeqn}). That is, the relinearized system  of the quadratic system (\ref{eq:cd})-(\ref{eq:zud}) is inconsistent. Thus, the quadratic system  (\ref{eq:cd})-(\ref{eq:zud}) has no real solution. As a result, system (\ref{eq:unslb}) has no BoS. Therefore, formula (\ref{fm:unsatexam}) is 1-in-3 unsatisfiable.
\end{exam}

The above example shows that the inconsistency of the relinearized system can be utilized to infer 1-in-3-UNSAT of a $3$-CNF formula. In the following, we formalize and generalize the idea and techniques in above toy example.  For the sake of clarity, through the article we take the notations roughly as following: capital letters with subindex are used for Boolean variables and variables for equations; script letters $\ell $, $\mathcal C$ and $\mathcal F$, etc, stand for literals, clauses and formulas;  low case letters $f, g, h$ and capital Greek letters $\Delta, \Theta$, etc, are used for Boolean or real functions; $\mathbb B=\{ 0,   1\}$ be the set of Boolean values, accordingly  $\mathbb B^n$ is the $n$-dimensional Boolean space; $\mathbb R$/$\mathbb N$ is the set of real/integer numbers.



\section{Formal Linear Algebra Formulation}\label{sec:basic}
Because Boolean unsatisfiability  problem can be efficiently reduction to 1-in-3-UNSAT which is co-NP-complete, it suffices to study efficient method for 1-in-3-UNSAT. To resolve 1-in-3-SAT/UNSAT, our basic idea is to convert it into  consistency problems  of the related linear system, over various fields. It consists of several crucial processes as follows
\begin{enumerate}
\item First one is the linear transformation (LT) that converts a Boolean formula into a linear system such that the Boolean formula is 1-in-3 satisfiable iff the linear system has some BoS;
\item Second one is the quadratic propagation (QP) that extends the transformed linear system into some quadratic system such that these two systems have the same BoSs;
\item Third one is the relinearization (ReL) that abstracts the quadratic system as a linear
system such that they have the same BoSs.
\end{enumerate}

In above procedure, it contains linearization twice. One is in the conversion from a Boolean formula into a linear system. Second one is abstracting a quadratic system as a linear system. So this approach is called {\it linear algebra formulation} (LAF) to highlight the status of linearization.

In the following, the previous procedure will be formulated formally.  To this end, we follow the standard concepts of propositional logic formula in terms of {\it literals} and {\it lauses}, conjunctive normal form (CNF). Given $n$-many Boolean variables $X=\{X_1, \cdots, X_n\}$, a  CNF formula $\mathcal F$ is defined as
\begin{eqnarray}\label{fm:CNF}
\mathcal F & = & \wedge_{i\leq m} \mathcal C_i   =  \wedge_{i\leq m} \vee_{j\leq j_i}\ell_{ij}(X)   \label{fm:CNFC}
\end{eqnarray}
where $\mathcal C_i=  \vee_{j\leq j_i}\ell_{ij}(X)$ are clauses, and each literal $\ell_{ij}(X)$ is of form $X_u$ or $\neg X_u$ for some $1\leq u\leq n$. As convention,  $\mathcal F$ has {\it pure  polarity} if at most one of $X_u$ and $\neg X_u$ can occur in $\mathcal F$  for any  $1\leq u\leq n$;  $\mathcal F$ is a {\it positive formula}  if merely $X_u$ can occurs in  $\mathcal F$; it  is a {\it $k$-CNF formula\footnote{In this article, a $k$-CNF formula  is a CNF formula in which each clause has at most $k$ literals. In some other literatures, a $k$-CNF formula is a CNF formula in which each clause has exactly $k$ literals.}} if $j_i\leq k$ for all $i\leq m$.

The notion of  1-in-3-SAT/UNSAT is crucial to LT, which will be extended to general CNF formula. A CNF formula $\mathcal F$ of form (\ref{fm:CNF}) is called {\it exactly one satisfiable} (EOS) if there is a truth assignment $X^*$ to the variables $X$ such that each clause $\mathcal C_i$ has exactly one true literal, otherwise called {exactly one unsatisfiable} (EOU).

For a given positive formula $\mathcal F$ of form (\ref{fm:CNF}), it can make a reduction of its EOS to the existence of BoS of a linear system defined by
\begin{eqnarray}\label{eq:ATLBS}
\sum_{j\leq j_i} \ell_{i,j}(X) & = & 1 \quad \quad \quad 1\leq i\leq m, \label{eq:TLBS}
\end{eqnarray}
where all $\ell_{i,j}(X)=X_k$ become  equation variables from  positive literals in $\mathcal F$. Let $\textbf{True}\leftrightarrow 1, \textbf{False}\leftrightarrow 0$ be the one-to-one relation between truth values $\{\textbf{True}, \textbf{False}\}$ and Boolean values $\{1, 0\}$, then
\begin{prop}\label{prop:BoS-EOS}
A positive CNF formula $\mathcal F$ of form (\ref{fm:CNF}) is EOS iff the corresponding linear system (\ref{eq:ATLBS}) has a BoS.
\end{prop}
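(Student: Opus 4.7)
The proof is a direct definition-unpacking via the bijection $\textbf{True}\leftrightarrow 1$, $\textbf{False}\leftrightarrow 0$, which lifts to a bijection between truth assignments $X^*\in\{\textbf{True},\textbf{False}\}^n$ and Boolean vectors $x\in\mathbb B^n$. The plan is to show that under this bijection, the EOS condition on each clause $\mathcal C_i$ of a positive $\mathcal F$ is translated term-by-term into the linear equation $\sum_{j\leq j_i}\ell_{ij}(X)=1$ of system (\ref{eq:ATLBS}).

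For the forward direction, I would assume $\mathcal F$ is EOS and fix a witnessing assignment $X^*$. For each clause $\mathcal C_i$, exactly one literal $\ell_{ij}(X^*)$ evaluates to $\textbf{True}$ and the remaining $j_i-1$ evaluate to $\textbf{False}$. Since $\mathcal F$ is positive, every $\ell_{ij}$ is simply some $X_u$, so under the bijection $X^*\mapsto x$ exactly one summand in $\sum_{j\leq j_i}\ell_{ij}(x)$ is $1$ and the rest are $0$, giving sum $1$. Thus $x$ is a BoS of (\ref{eq:ATLBS}). For the converse, starting from a Boolean solution $x\in\mathbb B^n$, each $\ell_{ij}(x)\in\{0,1\}$ (again using positivity, so no $1-X_u$ terms appear), and $\sum_j\ell_{ij}(x)=1$ with nonnegative $0/1$ summands forces exactly one summand to equal $1$; translating $x$ back to $X^*$ yields exactly one true literal per clause, i.e.\ EOS.

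There is essentially no technical obstacle: the only point requiring a sentence of care is the use of the \emph{positive} hypothesis, which guarantees that the syntactic substitution turning literals into equation variables coincides with the semantic evaluation map $\{\textbf{True},\textbf{False}\}\to\{0,1\}$. Were negative literals $\neg X_u$ permitted, one would need $1-X_u$ in the sum and the equation (\ref{eq:ATLBS}) as written would no longer capture EOS. I would state this observation explicitly so that later extensions of LT to non-positive formulas (handled elsewhere in the paper) are cleanly delimited from the present statement.
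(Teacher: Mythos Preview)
Your proposal is correct and matches the paper's approach: the paper does not spell out a proof for this proposition, treating it as immediate from the bijection $\textbf{True}\leftrightarrow 1$, $\textbf{False}\leftrightarrow 0$ introduced just before the statement, and your definition-unpacking argument is exactly the intended content. Your explicit remark on why the \emph{positive} hypothesis is needed is a useful addition that the paper leaves implicit.
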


As a result, we can study the EOS of a positive formula through investigating the BoS of the corresponding linear system.

 Two formulas $\mathcal F$ and $\mathcal H$ are said {\it equi-exactly-one-satisfiable} (equi-EOS) if $\mathcal F$ is exactly one satisfiable whenever $\mathcal H$ is  and vice versa. If the $\mathcal F$ defined by (\ref{fm:CNF}) is of pure polarity, then we could construct an equi-EOS  positive formula $\mathcal F^*$ by simply substituting all negation literal $\neg X_u$ with $X_u$. As $X_u$ does not occur in $\mathcal F$,  $\mathcal F$ and $\mathcal H$ must be equi-EOS. Therefore, we can conclude that
\begin{prop}\label{prop:PP2P}
Each pure polarity CNF formula is equi-EOS to a positive CNF formula.
\end{prop}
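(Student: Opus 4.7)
The plan is to formalize the variable-renaming sketched just before the statement and to verify that the clause-wise ``exactly one true literal'' count is preserved under the renaming.

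First, I would use the pure-polarity hypothesis to partition the variables occurring in $\mathcal F$ into two disjoint groups: the set $P$ of variables $X_u$ that appear only positively, and the set $N$ of variables $X_u$ for which only $\neg X_u$ appears (variables not occurring at all can be ignored). Purity guarantees $P$ and $N$ are disjoint. I would then construct $\mathcal F^*$ by introducing, for each $X_u\in N$, a fresh variable $Y_u$ and replacing every literal $\neg X_u$ by $Y_u$, while leaving literals over $P$ untouched. The resulting $\mathcal F^*$ is a positive CNF formula with identical clause structure to $\mathcal F$: same number of clauses and same number of literals per clause.

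Next, I would exhibit a bijection between truth assignments for $\mathcal F$ and for $\mathcal F^*$. Given an assignment $X^*$ to the variables of $\mathcal F$, define the companion assignment by keeping $X^*_u$ unchanged on $P$ and setting $Y^*_u := \neg X^*_u$ for each $X_u\in N$; the inverse map is symmetric. Under this correspondence, every literal of $\mathcal F$ evaluates to the same truth value as its replacement in $\mathcal F^*$: positive literals are unchanged, and a negated literal $\neg X_u$ agrees with its replacement $Y_u$ by the choice of $Y^*_u$. Consequently each clause contains the same number of true literals before and after the transformation, so $\mathcal F$ has a truth assignment making every clause contain exactly one true literal iff $\mathcal F^*$ does, which is exactly equi-EOS.

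No step here presents a deep obstacle. The only care required is the bookkeeping of the renaming (choosing each $Y_u$ to be fresh, which matches the excerpt's informal convention of ``reusing'' the symbol $X_u$ positively since $X_u$ itself does not otherwise appear in $\mathcal F$) and the clause-by-clause verification of literal-wise truth equivalence, both of which follow immediately from the pure-polarity assumption.
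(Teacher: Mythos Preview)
Your proposal is correct and follows essentially the same approach as the paper: the paper's argument, given informally in the sentence preceding the proposition, is to replace each negative literal $\neg X_u$ by the positive literal $X_u$ itself (since pure polarity guarantees $X_u$ does not otherwise occur), which is exactly your renaming with the ``fresh'' variable identified with the unused $X_u$. Your version is a bit more explicit about the bijection on assignments and the clause-by-clause preservation of the true-literal count, but the underlying construction and reasoning are identical.
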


Therefore, the EOS of each pure polarity formula can also be studied through some linear system. In general, a CNF formula $\mathcal F$ may have no pure polarity property. In this case, we introduce auxiliary variables $Y_u$ for all $\neg X_u$. Then we could construct an equi-EOS positive formula $\mathcal F^*$ as follows.  Let $\mathcal F(Y/\neg X)$ be the resulted formula by substitute all negative literals $\neg X_u$ with $Y_u$ in $\mathcal F$. Let
\begin{eqnarray}\label{eq:G2PF}
\mathcal F^* & = &  \wedge_u (X_u\vee Y_u) \wedge  \mathcal \mathcal F(Y/\neg X)  \label{eq:GtoPF}
\end{eqnarray}
where $u$ ranges in the index set $I$ such that if $\neg X_u$ occurs in  $\mathcal F$ then $u\in I$. Such $\mathcal F^*$ is called a {\it positivization} of $\mathcal F$, also denoted by $\mathcal P(\mathcal F)$. Now
\begin{prop}\label{prop:G2P}
Each CNF formula $\mathcal F$  is equi-EOS to a positive formula   $\mathcal P(\mathcal F)$.
\end{prop}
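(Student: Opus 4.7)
The plan is to establish the iff directly from the defining construction (\ref{eq:GtoPF}) of $\mathcal P(\mathcal F)$, exploiting one simple fact: any EOS assignment of $\mathcal P(\mathcal F)$ must satisfy exactly one of $X_u$ and $Y_u$ in each auxiliary clause $X_u\vee Y_u$, which is precisely the constraint $Y_u=\neg X_u$. The auxiliary variables $Y_u$ can therefore serve as faithful positive proxies for the negative literals $\neg X_u$ of $\mathcal F$.

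For the forward direction, I would take an EOS assignment $X^*$ of $\mathcal F$ and extend it by setting $Y_u^* := \neg X_u^*$ for every $u\in I$. Each auxiliary clause $X_u\vee Y_u$ then contains exactly one true literal by construction. For every original clause $\mathcal C_i$ of $\mathcal F$, the substituted clause in $\mathcal F(Y/\neg X)$ has, literal by literal, the same truth values under the extended assignment as $\mathcal C_i$ has under $X^*$ (because $Y_u^* = \neg X_u^*$ by choice); hence the substituted clause is also exactly-one-satisfied, and $\mathcal P(\mathcal F)$ is EOS.

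For the backward direction, suppose $(X^*,Y^*)$ is an EOS assignment of $\mathcal P(\mathcal F)$. The auxiliary clauses force $Y_u^* = \neg X_u^*$ for all $u\in I$, so each substituted clause in $\mathcal F(Y/\neg X)$ evaluates, literal by literal, exactly as its preimage clause of $\mathcal F$ would under $X^*$; exactly-one-satisfaction therefore transfers back from $\mathcal P(\mathcal F)$ to $\mathcal F$. Combined with the immediate observation that $\mathcal P(\mathcal F)$ is a positive CNF formula (every literal is either $X_u$ or $Y_u$), this gives the claimed equi-EOS correspondence.

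I do not expect a genuine obstacle here; the argument is essentially bookkeeping forced by the design of (\ref{eq:GtoPF}). The one place to be careful is the index set $I$: for $u\notin I$ the variable $X_u$ appears only positively in $\mathcal F$, no $Y_u$ is introduced, and the correspondence on these indices is trivial. In spirit the proof just extends the reasoning behind Proposition~\ref{prop:PP2P}, with the auxiliary clauses $X_u\vee Y_u$ providing the mechanism to encode $\neg X_u$ in purely positive form.
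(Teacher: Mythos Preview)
Your argument is correct and is precisely the reasoning the paper relies on: the paper states the proposition without a formal proof, treating it as immediate from the construction~(\ref{eq:GtoPF}), and your two directions simply make explicit that the auxiliary clauses $X_u\vee Y_u$ force $Y_u=\neg X_u$ under any EOS assignment. There is nothing to add; your write-up faithfully expands what the paper leaves implicit.
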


Accordingly, it can convert the EOS  of  $\mathcal F$ into the existence of BoS of the following linear system
\begin{eqnarray}\label{eq:GATLBS}
\sum_{j\leq j_i} \ell_{i,j}^*(X) & = & 1, \quad \quad \quad 1\leq i\leq m \label{eq:GLBS}  \label{eq:PL}   \\
X_u+Y_u & = & 1 \quad \quad \quad  u\in I   \label{eq:PLC}
\end{eqnarray}
where $\ell_{i,j}^*(X)=X_u$ if $\ell_{i,j}(X)=X_u$ and $\ell_{i,j}^*(X)=Y_u$ if $\ell_{i,j}(X)=\neg X_u$. As a consequence, it can convert the EOS problem of a generic CNF formula $\mathcal F$  into the existence of BoS of the related linear system. The transformation from a CNF formula $\mathcal F$ into a linear system like (\ref{eq:ATLBS}) or (\ref{eq:GATLBS}) is the so-called {\it linearizing transformation}. Without loss of  generality, we study the BoS of linear system (\ref{eq:ATLBS}) for positive formula in stead of generic ones in the sequel.

Generally, it is $\mathbf{NP}$-hard to decide whether a linear system (\ref{eq:GATLBS}) has a BoS.  Anyway, we could exploit the idea behind the toy example in the previous section to approximate BoS. The basic idea is to resort to some easily solving BoS-equisolvable linear system, where two linear systems $\mathcal L1$ and $\mathcal L2$ are BoS-equisolvable if they satisfy: $\mathcal L1$ has  BoS iff $\mathcal L2$ has BoS. In light of this, we extend a given linear system $\mathcal L$ to some  BoS-equisolvable  linear system $\mathcal L^*$ containing $\mathcal L$ as a subsystem. To this end, the equations in $\mathcal L^*$ are obtained by two consecutive algebraic operations on $\mathcal L$, which are  QP and ReL. Herein, the QP over (\ref{eq:ATLBS}) is consisted of two sorts of operations. One is done by mutually multiplying equations inside $\mathcal L$ side by side, which is called {\it inner quadratic propagation} (IQP),  another is accomplished through side by side  multiplications over equations of $\mathcal L$ and the following quadratic constraints
\begin{eqnarray}\label{eq:QCT}
X_u & = & X_u^2, \quad \quad \quad 1\leq u\leq n \label{eq:QC}
\end{eqnarray}
which is called {\it constraint quadratic propagation} (CQP). Formally, IQP and  CQP are carried out respectively by
\begin{eqnarray}
(\sum_{j\leq j_i} \ell_{i,j}(X))\cdot (\sum_{j\leq j_t} \ell_{t,j}(X)) & =  1, &  1\leq i\leq t\leq m  \label{eq:QPF}  \label{eq:QP}  \label{eq:IQP} \\
X_u \cdot  \sum_{j\leq j_i} \ell_{i,j}(X) & =  X_u^2, &\quad  1\leq u\leq n, \quad  1\leq i\leq m   \label{eq:CQP}
\end{eqnarray}


ReL is to substitute all quadratic monomials $X_iX_j$ in the quadratic system  (\ref{eq:QP}) with new variables, $Z=(Z_1, \cdots, Z_u, \cdots, Z_v)$ says, in order to transform such quadratic system into a linear system. Wherein, $u\leq v= \frac{n(n+1)}{2}$. Let $ReL(Z)$ denote such linear system, then for a positive $k$-CNF formula
\begin{thm}\label{thm:main}
The following three assertions are equivalent:
\begin{enumerate}
\item The linear system (\ref{eq:ATLBS}) has BoS;
\item The quadratic system (\ref{eq:QP}-\ref{eq:CQP}) has BoS;
\item The $ReL(Z)$  has BoS.
\end{enumerate}
\end{thm}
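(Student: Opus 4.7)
The plan is to prove the cyclic chain $(1)\Rightarrow(2)\Rightarrow(3)\Rightarrow(1)$ by direct verification, exploiting repeatedly the Boolean identity $X_u^2 = X_u$.

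The implication $(1)\Rightarrow(2)$ is routine substitution: starting from a Boolean solution $X^*$ of (\ref{eq:ATLBS}), the constraint $X_u = X_u^2$ is automatic on $\{0,1\}$, each IQP equation (\ref{eq:IQP}) evaluates to $1\cdot 1 = 1$ because both linear factors equal $1$ at $X^*$, and each CQP equation (\ref{eq:CQP}) evaluates to $X_u^*\cdot 1 = X_u^* = (X_u^*)^2$. The implication $(2)\Rightarrow(3)$ is equally routine: define $Z_{ij}^* := X_i^* X_j^*$, which automatically lies in $\{0,1\}$, and observe that relinearization is purely the syntactic substitution $X_iX_j\mapsto Z_{ij}$, so every quadratic equation satisfied at $X^*$ is satisfied at $(X^*,Z^*)$, giving a BoS of $ReL(Z)$.

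The main obstacle is $(3)\Rightarrow(1)$, since a priori a BoS of $ReL(Z)$ assigns values to the $Z$-variables with no structural guarantee that $Z_{ij}$ equals any actual product $X_iX_j$. The route I would take is to observe that the original linear system (\ref{eq:ATLBS}) sits inside $ReL(Z)$ as an untouched subsystem: its monomials are already linear, so the substitution $X_iX_j\mapsto Z_{ij}$ does not affect them. Projecting any BoS of $ReL(Z)$ onto its $X$-coordinates then produces a BoS of (\ref{eq:ATLBS}) by simple restriction. If instead one reads $ReL(Z)$ as containing only the relinearized IQP, CQP, and relinearized Boolean constraints $X_u = Z_{uu}$, one recovers each clause equation algebraically by setting $X_u := Z_{uu}$, summing the relinearized CQP equations $\sum_j Z^{(u,i)}_j = Z_{uu}$ over the variables $u$ appearing in clause $i$, and matching the left-hand side against the relinearized IQP with $i = t$, which forces $\sum_j \ell_{i,j}(X) = 1$.
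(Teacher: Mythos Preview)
Your cyclic scheme and the verifications for $(1)\Rightarrow(2)$ and $(2)\Rightarrow(3)$ are correct and match the paper's one-line dismissal of those directions. One small slip: in $(2)\Rightarrow(3)$ you speak of $(X^*,Z^*)$ as a BoS of $ReL(Z)$, but $ReL(Z)$ has only the $Z$-variables---every monomial in (\ref{eq:QP})--(\ref{eq:CQP}) is degree two, so the substitution $X_iX_j\mapsto Z_{ij}$ eliminates $X$ entirely.

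This same misreading undermines your \emph{first} route for $(3)\Rightarrow(1)$. The quadratic system being relinearized is only IQP plus CQP; the linear equations (\ref{eq:ATLBS}) are \emph{not} part of it, nor are the Boolean constraints (\ref{eq:QCT}). Hence there is no ``untouched subsystem'' sitting inside $ReL(Z)$ and no $X$-coordinates to project onto.

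Your \emph{second} route, however, is exactly what the paper intends by ``$ReL(Z)$ can be simplified so that it contains (\ref{eq:ATLBS}) as a subsystem.'' Setting $X_u:=Z_{uu}$ (a definition, not a constraint---the relinearized Boolean constraint you mention is not actually present, but also not needed), summing the relinearized CQP equations over the variables $u$ occurring in clause $i$ produces on the left precisely the relinearized diagonal IQP expression $(i=t)$, which equals $1$; the right-hand side of the sum is $\sum_u Z_{uu}$, and one recovers the $i$-th clause equation. So the argument is sound and coincides with the paper's; you should simply drop the first route and the reference to Boolean constraints, and state plainly that $ReL(Z)$ lives entirely in the $Z$-variables.
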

\begin{proof}
It is easy to show implications from 1) to 2), and from 2) to 3). In fact,  $ReL(Z)$  can be simplified so that it contains (\ref{eq:ATLBS}) as a subsystem. Therefore, 3) naturally implies 1).
\end{proof}

Based on this theorem and Proposition \ref{prop:BoS-EOS}, we obtain a sufficient condition for that a positive formula $\mathcal F$ has no BoS, as follows.
\begin{prop}\label{prop:sound}
If $ReL(Z)$ has no solution over anyone of $\mathbb R^v$ and $\mathbb N^v$, then it definitely has no BoS. Therefore, $\mathcal F$ is EOU.
\end{prop}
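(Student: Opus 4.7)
The plan is to chain together the inclusions of solution spaces with the two structural results already in the section. Observe that the Boolean cube $\mathbb{B}^v = \{0,1\}^v$ sits inside both $\mathbb{N}^v$ and $\mathbb{R}^v$, so any Boolean solution of $ReL(Z)$ is automatically a nonnegative integer solution and also a real solution. Contrapositively, the nonexistence of a solution in either $\mathbb{N}^v$ or $\mathbb{R}^v$ forces the nonexistence of a Boolean solution.

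From here, I would just apply the implication $(3) \Rightarrow (1)$ of Theorem \ref{thm:main}: if $ReL(Z)$ admits no BoS, then the original linear system (\ref{eq:ATLBS}) admits no BoS either. Finally, Proposition \ref{prop:BoS-EOS} equates the existence of a BoS for (\ref{eq:ATLBS}) with the EOS of the positive CNF formula $\mathcal F$, so the failure of BoS for (\ref{eq:ATLBS}) is exactly the EOU of $\mathcal F$.

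There is no real obstacle here; the statement is essentially a corollary packaging previous results. The only thing worth being careful about is making the reading of ``anyone of $\mathbb{R}^v$ and $\mathbb{N}^v$'' explicit, i.e.\ that it suffices for just one of the two relaxations to be infeasible (rather than requiring both). Since $\mathbb{B}^v$ is contained in each, either relaxation being infeasible is already enough to kill any BoS, and the rest of the argument is a one-line invocation of Theorem \ref{thm:main} and Proposition \ref{prop:BoS-EOS}.
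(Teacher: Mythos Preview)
Your argument is correct and is exactly the route the paper intends: the proposition is stated as an immediate consequence of Theorem~\ref{thm:main} together with Proposition~\ref{prop:BoS-EOS}, via the trivial inclusion $\mathbb{B}^v\subseteq\mathbb{N}^v,\mathbb{R}^v$. One cosmetic remark: in this paper $\mathbb{N}$ denotes the integers (not the nonnegative integers), but since $\{0,1\}^v$ is contained in either set your inclusion step goes through unchanged.
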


As a result, the 1-in-3-SAT/UNSAT problem can be converted into a consistency problem of certain linear system over some field or ring. Sometime, it can obtain a satisfying assignment for certain  1-in-3-SAT instance.

\begin{exam}\label{exam:unqsat}
	Consider the following $3$-CNF formula
	\begin{eqnarray}
	(X_1\vee X_2 \vee X_3)\wedge (X_2\vee X_4 \vee X_5) \wedge (X_2\vee X_6)  \wedge (X_3\vee X_4 \vee X_6)  \label{fm:toySAT}
	\end{eqnarray}
Applying LAF, we get it's ReL
	\begin{eqnarray}
	Y_{11}+Y_{22}+Y_{33} & = & 1     \nonumber\\
	& \vdots &   \nonumber \\
	Y_{33}+Y_{44}+Y_{66} & = & 1     \label{eq:rlq}\\
	& \vdots &   \nonumber \\
	Y_{36}+Y_{46}+Y_{66} & = & Y_{66}    \nonumber
	\end{eqnarray}
	It is easy to verify that system (\ref{eq:rlq}) has only one solution
	\begin{eqnarray}
	Y_{11}=Y_{15}=Y_{16}=Y_{55}=Y_{56}=Y_{66}=1      \label{eq:BoSu}\\
	Y_{12}=Y_{13}=Y_{14}=Y_{22}=Y_{23}=Y_{24}=Y_{25}=Y_{26}=\cdots=Y_{45}=0    \label{eq:BoSz}
	\end{eqnarray}
	Accordingly, $(X_1,X_2,X_3,X_4,X_5,X_6)=(1,0,0,0,1,1)$ must be the unique BoS to the linear system  associated with (\ref{fm:toySAT}), and correspondingly, $(T,F,F,F,T,T)$ is the unique truth assignment by which formula (\ref{fm:toySAT}) is 1-in-3 satisfiable.
	
\end{exam}

\section{Algorithms and Experiments}\label{sec:alg}
In the previous section, the LAF for  EOS/EOU was established on mathematically rigorous foundation.  It is the core of LAF for general SAT. Hence, we present it as  Algorithm \ref{alg:LAT4EOS}. Herein, outcome `EOS'  means that $\mathcal F$ is exactly one satisfiable; outcome `EOU' means that $\mathcal F$ is not so; outcome `Unk' stands for  that the answer is unknown by the method. The soundness of Algorithm \ref{alg:LAT4EOS} is guaranteed by Propositions \ref{prop:BoS-EOS} and \ref{prop:sound}.

\begin{algorithm}[t]
\scriptsize   \caption{Kernel Algorithm: LAT for EOU}   \label{alg:LAT4EOS}
\Input   A 3-CNF  formula $\mathcal F(X) = \mathcal C_i(X) =\wedge_i \vee_j L_{ij} (X)$ with
variables $X = \{X_1,\cdots, X_n\}$;   \\
\Initial Answer=  `Unknown';  \\
\Output     Answer=`EOS', or `EOU', or `Unk'.

\begin{algorithmic}[1]
\State Do LT on  $\mathcal F(X)$ and obtain a linear system (\ref{eq:ATLBS})
\label{code:FL-LBS}
\State Decide whether (\ref{eq:ATLBS}) is consistent over $\mathbb R^n$ and $\mathbb N^n$
 \If {it is not consistent over anyone of them}
  \State  Set Answer=`EOU', return
  \If {it is consistent}
   \State check if (\ref{eq:ATLBS}) has a unique solution over any of $\mathbb R^n$ and $\mathbb N^n$ \label{code:US}
    \If {(\ref{eq:ATLBS}) has a unique solution, $X^*$ says,}
      \State Decide whether  $X^*$ is a BoS
       \If {$X^*$ is a BoS}
        \State Set Answer= `EOS', return
         \If {$X^*$ is not a BoS}
         \State Set Answer=`EOU', return
         \If {(\ref{eq:ATLBS}) always has more than one solution}
    \State Go to \ref{code:QP}
\EndIf
\EndIf
\EndIf
\EndIf
\EndIf
\EndIf

\State Do QP on (\ref{eq:ATLBS}), and obtain a quadratic system (\ref{eq:QP})  \label{code:QP}
\State Do relinearization by substitution on (\ref{eq:QP}), and obtain a linear system, says $ReL(Z)$    \label{code:LA}
\State Decide whether $ReL(Z)$ is consistency over $\mathbb R^v$ and $\mathbb N^v$  \label{code:lz}
   \If {$ReL(Z)$  is inconsistency over anyone of them}
    \State Set Answer=`EOU', return
       \If {$ReL(Z)$ is always consistency}
       \State Set Answer= `Unk',  return
 \EndIf
\EndIf   \\
\Return Answer  \label{code:end}
\end{algorithmic}
\end{algorithm}

Table \ref{tab:eos} reports the experiment results using Algorithm \ref{alg:LAT4EOS}. In the experiments, the instances are  randomly generated $3$-CNF formulas. In Table \ref{tab:eos}, {\bfseries  $\#$T} stands for the number of instances with {\bfseries $\#$V} many variables and {\bfseries  $\#$C} many clauses; {\bfseries  $\#$Unk}, {\bfseries  $\#$EOS} and {\bfseries $\#$EOU}  denote the numbers of corresponding answers. The experiment results confirmed that LAF make essential significance for EOU. However, the results showed that LAF is not good for EOS instances. Anyway, it also provides us some insights of 1-in-3-SAT. For example, a 3-CNF formula often is  1-in-3-UNSAT when the number of its clauses is more than 90 percent of the number of its variables.

\begin{table}[!h]
\caption{Implementation of Kernel Algorithm}
\label{tab:eos}
\centering
\begin{tabular}{c|c|c|c|c|c}
\hline
~~\bfseries $\#$V~~ &~~ \bfseries  $\#$C~~ &~~ \bfseries  $\#$T ~~ &  \bfseries  $\#$Unk  & \bfseries  $\#$EOS &\bfseries $\#$EOU      \\
\hline
50 & 41 & 100 & 12 & 9 & 79 \\
\hline
50 & 46 & 100 & 0 & 0 & 100 \\
\hline
70 & 58 & 100 & 8 & 0 & 92 \\
\hline
70 & 66 & 100 & 0 & 1 & 99 \\
\hline
90 & 74 & 100 & 11 & 0 & 89 \\
\hline
90 & 82 & 100 & 0 & 0 & 100 \\
\hline
130 & 109 & 100 & 15 & 0 & 85 \\
\hline
130 & 118 & 100 & 0 & 0 & 100 \\
\hline
150 & 125 & 100 & 36 & 0 & 64 \\
\hline
150 & 136 & 100 & 0 & 0 & 100 \\
\hline

\end{tabular}
\end{table}

As 1-in-3-SAT is $\mathbf{NP}$-complete \cite{Schaefer1978}, SAT can be reducible to  1-in-3-SAT efficiently in polynomial time. Therefore, LAF can be used for efficiently resolving general SAT/UNSAT. The recipe  is to perform a series of equisatisfiable transformations as follows. Given a general formula $\mathcal G(X)$ with variables $X=\{X_1,\cdots, X_n\}$, we carry out the following process:
\begin{enumerate}
\item First, we transform $\mathcal G(X)$ into an equisatisfiable CNF formula, says $\mathcal G^*(X,Y)$.
\item For $\mathcal G^*(X,Y)$, it computes an equisatisfiable 3-CNF formula $\mathcal T(X,Y,Z)$.
\item Based on $\mathcal T(X,Y,Z)$,  a positive formula $\mathcal F(X,Y,Z,U)$ is computed such that $\mathcal T(X,Y,Z)$ is satisfiable if and only if $\mathcal F(X,Y,Z,U)$ is 1-in-3 satisfiable.
\item Algorithm \ref{alg:LAT4EOS} is applied to $\mathcal F(X,Y,Z,U)$. If Algorithm \ref{alg:LAT4EOS} outputs ``EOS'' for $\mathcal F(X,Y,Z,U)$ then $\mathcal G(X)$ is satisfiable, similarly, if ``EOU'' is output then $\mathcal G(X)$ is unsatisfiable. Otherwise, the answer is ``Unk'',  that is, satisfiability of $\mathcal G(X)$ is not decided by LAF.
\end{enumerate}
The whole procedure could be formally summarized by Algorithm \ref{alg:LAT4SAT}.
\begin{algorithm}[t] \scriptsize  \caption{LAF Test for SAT}   \label{alg:LAT4SAT}
\Input   A CNF formula $\mathcal G(X)$ with variables $X$;   \\
\Initial Answer=  `Unknown';  \\
\Output     Answer=`SAT', or `UNSAT', or `Unk'.
\begin{algorithmic}[1]
\State Transform  $\mathcal G(X)$ into an equisatisfiable CNF formula $\mathcal G^*$
\label{code:CNFTrans}
\State Compute an equisatisfiable 3-CNF formula $\mathcal T$ for $\mathcal G^*$
\label{code:3-SATTrans}
\State Compute a 3-CNF positive formula $\mathcal F$ for $\mathcal T$
\label{code:PFTrans}
\State Call Algorithm \ref{alg:LAT4EOS} for $\mathcal F$
\If {Output is `EOS'}
\State Set Answer=`SAT', return
\If {Output is `EOU'}
\State Set Answer=`UNSAT', return
\If {Output is `Unk'}
\State Set Answer=`Unk', return
\EndIf
\EndIf
\EndIf   \\
\Return Answer  \label{code:satend}
\end{algorithmic}
\end{algorithm}

Another major concern of an algorithm is its computational complexity. A short complexity analysis of these two algorithms is performed in what follows. For Algorithm \ref{alg:LAT4EOS}, its complexity is mainly due to the decision of consistency of a linear system and the implementation of QP. Given a 3-CNF positive formula $\mathcal F$ of $m$-clauses and $n$-variables with $m\leq n$, in the final linear system $ReL(Z)$, there are $\frac{n(n+1)}{2}$ many variables and $mn+\frac{m(m+1)}{2}+n$ many linear equations. For consistency of linear systems, there is an algorithm \cite{Kaltofen2015} of complexity $O(MNr)$ to decide whether a linear system of $M$ linear equations and $N$ variables is consistent over $\mathbb R$, where $r$ is the rank of the coefficient matrix. When we use the consistency over $\mathbb N$, it needs to compute a full row rank form coefficient matrix to compute Hermit normal form (HNF) \cite{Schrijver1998}. For HNF, the algorithm in \cite{Micciancio2001} is capable to convert an integer $s\times t$ matrix into HNF with complexity $O(st^4)$, where integers $s\leq t$.   Therefore, the complexity of Algorithm \ref{alg:LAT4EOS} is about $O(n^{10})$. Similarly, Algorithm \ref{alg:LAT4SAT} terminates in polynomial time since its additional actions for converting a general CNF formula into a 3-CNF positive formula is of polynomial size of the numbers of variables and clauses.

Interestingly, if the inconsistency of $ReL(Z)$ over real number field is also a necessary condition for that the corresponding 3-CNF positive formula is not 1-in-3 satisfiable, then it can modify Algorithm \ref{alg:LAT4EOS} by substituting  `Unk' with `EOS'.
Accordingly, 1-in-3-SAT would be decided in time $O(n^4m^2+m^4n^2+m^3n^3)$. As a result, SAT would be solved in polynomial time. Unfortunately, the inconsistency of $ReL(Z)$ over $\mathbb R$ cannot be a such necessary condition,   here is a counterexample
\begin{eqnarray}
\mathcal F & = &(X_1 \vee X_2 \vee X_3) \wedge (X_2\vee X_4\vee X_5) \wedge (X_2\vee X_6) \wedge (X_3\vee X_4 \vee X_6) \nonumber \\
 & & \wedge (X_1 \vee X_7 \vee X_8) \wedge (X_1\vee X_{9}\vee X_{10}) \wedge (X_1\vee X_{11}\vee X_{12}) \nonumber \\
  & & \wedge (X_7 \vee X_{13} \vee X_{14}) \wedge (X_{9}\vee X_{13} \vee X_{15}) \wedge (X_{11}\vee X_{14} \vee X_{15})
\end{eqnarray}
However, its corresponding $ReL(Z)$ has no solution over $\mathbb N$, which still can show that $\mathcal F$ is 1-in-3-UNSAT. Nevertheless, in our experiments there are several such cases. Therefore, it is interesting to ask
\begin{ques}\label{question}
Given a 3-CNF formula $\mathcal F$, whether the inconsistency of $ReL(Z)$  over $\mathbb N$ is a sufficient condition for it being 1-in-3-UNSAT?
\end{ques}

If the answer  is `yes', then it can modify  Algorithm \ref{alg:LAT4SAT} accordingly and  obtain  a definite answer `SAT' or ``UNSAT' for each formula input. In such case, the SAT and UNSAT both can be solved in polynomial time. This will lead to P={NP}=co-NP. Anyway,  $\mathbf P\neq \mathbf{NP}$ is a overwhelming opinion \cite{cookPvsNP,Aaronson2016}  currently. So the most possible answer might be `no'. In this case, it is natural to ask what is the class of Boolean formulas whose satisfiability can be determined by the inconsistency of $ReL(Z)$  over $\mathbb N$?

\section{Summary}\label{sec:con}

This study proposes a novel method LAF to SAT/UNSAT. This method mainly establishes an equivalent relation between satisfiability of Boolean formulas and Boolean solvability of linear system, and brings up a new approach for Boolean solution to linear system. As can be seen, LAF is a procedure dramatically different from DPLL. Hence, it gave an affirmative answer to the question in the end of \textbf{Challenge 1} in \cite{Kautz2007Survey}.  More importantly, we developed two polynomial time algorithms for unsatisfiability testing based upon LAF. However, it can only say that LAF is an incomplete method for SAT unless Question \ref{question}  has a affirmative answer.  Nevertheless, LAF has been employed to successfully prove 1-in-3-UNSAT for many nontrivial cases in the experiment. So far, LAF is mainly used to show EOU especially 1-in-3-UNSAT. In addition to Question \ref{question}, it is also interesting to study how to develop LAF to compute a satisfying assignment for satisfiable formulas.

\bibliographystyle{splncs03}

\end{document}